\newcommand{\Image}{\mathit{Image}}
\newcommand{\forward}{forward}
\newcommand{\backward}{backward}
\newcommand{\beginAx}{begin}
\newcommand{\join}{join}
\newcommand{\jumpLeft}{jumpLeft}
\newcommand{\jumpRight}{jumpRight}
\newcommand{\fieldAx}{field}
\newcommand{\ptrAx}{ptr}
\newcommand{\repAx}{repeat}
\newcommand{\repLenAx}{replen}
\newcommand{\reverse}{\mathit{reverse}}
\newcommand{\ascii}{\scName{ascii}}
\newcommand{\png}{\scName{png}}
\newcommand{\json}{\scName{json}}
\newcommand{\xml}{\scName{xml}}
\newcommand{\padsproj}{\scName{pads}}
\newcommand{\packettypes}{\scName{Packet Types}}
\newcommand{\datascript}{\scName{DataScript}}
\newcommand{\refForward}[1]{\refAxiom{\forward}{#1}}
\newcommand{\refBegin}{\texttt{\beginAx}}
\newcommand{\refBackward}[1]{\refAxiom{\backward}{#1}}
\newcommand{\refJoin}[1]{\refAxiom{\join}{#1}}
\newcommand{\refJumpLeft}[1]{\refAxiom{\jumpLeft}{#1}}
\newcommand{\refJumpRight}[1]{\refAxiom{\jumpRight}{#1}}
\newcommand{\refRepLenAx}[1]{\refAxiom{\repLenAx}{#1}}
\newcommand{\refAxiom}[2]{\texttt{#1}$_{#2}$}
\newcommand{\concatenate}{\centerdot}
\newcommand{\concatenateApp}[2]{#1\concatenate#2}
\newcommand{\forwardChainingInfer}{\mathit{forwardChainingInference}}
\newcommand{\tuple}[1]{\langle#1\rangle}
\newcommand{\alabeleditem}[1]{\tuple{#1}}
\newcommand{\simpleSuccessStruct}{\provingStructure{\axiomset',\infGraph}}
\newcommand{\simpleFailureStruct}{\provingStructure{\mathit{NonDeserializable}, \axiomset', \infGraph}}
\newcommand{\simpleFailureStructRet}{\provingStructure{\mathit{Deserializable}, \axiomset', \infGraph}}
\newcommand{\provingStructure}[1]{\tuple{#1}}
\newcommand{\tagForward}{\tagAxiom{\forward}{i}}
\newcommand{\tagBegin}{\tag{\refBegin}}
\newcommand{\tagBackward}{\tagAxiom{\backward}{i}}
\newcommand{\tagJoin}{\tagAxiom{\join}{i}}
\newcommand{\tagJumpLeft}{\tagAxiom{\jumpLeft}{o,s,i}}
\newcommand{\tagJumpRight}{\tagAxiom{\jumpRight}{o,s,i}}
\newcommand{\tagFieldAx}{\tagAxiom{\fieldAx}{i}}
\newcommand{\tagPtrAx}{\tagAxiom{\ptrAx}{i}}
\newcommand{\abarg}{\concatenateApp{b}{a}}
\newcommand{\tagRepForward}{\tagAxiom{\forward}{\abarg}}
\newcommand{\tagRepBackward}{\tagAxiom{\backward}{\abarg}}
\newcommand{\tagRepJoin}{\tagAxiom{\join}{\abarg}}
\newcommand{\tagRepJumpLeft}{\tagAxiom{\jumpLeft}{\abarg,s,i}}
\newcommand{\tagRepJumpRight}{\tagAxiom{\jumpRight}{\abarg,s,i}}
\newcommand{\tagRepAx}{\tagAxiom{\repAx}{b}}
\newcommand{\tagRepLenAx}{\tagAxiom{\repLenAx}{\abarg}}
\newcommand{\tagRepLenHead}{\tagAxiom{rephead}{\abarg}}
\newcommand{\tagRepLenTail}{\tagAxiom{reptail}{\abarg}}
\newcommand{\tagAxiom}[2]{\tag{\refAxiom{#1}{#2}}}
\newcommand{\padparsergraph}{\def\arraystretch{0.3}\setlength{\tabcolsep}{0.2em}}
\newcommand{\scName}[1]{\textsc{#1}\xspace}
\newcommand{\fieldId}[1]{\lstinline{#1}\xspace}
\newcommand{\pngLength}{\fieldId{Length}}
\newcommand{\pngType}{\fieldId{Type}}
\newcommand{\pngData}{\fieldId{Data}}
\newcommand{\pngCRC}{\fieldId{CRC}}
\newcommand{\occurs}{\hookrightarrow}
\newcommand{\noccurs}{\centernot{\hookrightarrow}}
\newcommand{\LabeledItems}{\Phi}
\newcommand{\LabeledRepItems}{P}
\newcommand{\headpos}{\nearrow}
\newcommand{\headpostab}{\multicolumn{1}{r}{$\headpos$}}
\newcommand{\headpostabbuf}{\multicolumn{1}{r}{\bufferized$\headpos$}}
\newcommand{\bufferized}{\cellcolor[gray]{0.9}}
\newcommand{\clips}{\scName{CLIPS}}
\newcommand{\python}{\scName{Python}}
\newcommand{\jess}{\scName{Jess}}
\newcommand{\items}{\mathcal{I}}
\newcommand{\repeitems}{\mathcal{R}}
\newcommand{\anitem}{\iota}
\newcommand{\anotheritem}{\upsilon}
\newcommand{\myanitem}{\anitem_{\interpr{i}}}
\newcommand{\myanitemarg}[1]{\anitem_{\interpr{#1}}}
\newcommand{\interpr}[1]{\llbracket#1\rrbracket}
\newcommand{\ptrRange}{\mathit{range}}
\newcommand{\reader}{parser\xspace}
\newcommand{\Reader}{Parser\xspace}
\newcommand{\landingpad}{o}
\newcommand{\premise}{\alpha}
\newcommand{\consequence}{\beta}
\newcommand{\pointerspan}{s}
\newcommand{\field}{\mathbf{f}}
\newcommand{\const}{\mathbf{c}}
\newcommand{\varfield}{\mathbf{v}}
\newcommand{\mytitle}{Horn Binary Serialization Analysis}
\newcommand{\readerFol}[1]{\mathfrak{R}_{#1}}
\newcommand{\readerRepFol}[1]{\mathfrak{S}_{#1}}
\newcommand{\folpredicate}[1]{\mathit{#1}}
\newcommand{\beg}[1]{\folpredicate{Beg}(#1)}
\newcommand{\len}[1]{\folpredicate{Len}(#1)}
\newcommand{\val}[1]{\folpredicate{Val}(#1)}
\newcommand{\rep}[1]{\folpredicate{Rep}(#1)}
\newcommand{\true}{\mathit{true}}
\newcommand{\KB}{\mathit{KB}}
\newcommand{\infGraph}{\mathit{G}}
\newcommand{\modusponens}{\vdash}
\newcommand{\axiomset}{\mathcal{A}}
\newcommand{\formalSystem}{\mathfrak{D}}
\newcommand{\pointerFieldNum}{\pi
}
\newcommand{\repetition}[1]{\bm{[}#1\bm{]*}}
\newcommand{\repLen}[1]{\folpredicate{RepLen}(#1)}
\newcommand{\pspan}[1]{\folpredicate{Ptr}(#1)}
\newcommand{\pointer}[2]{\llparenthesis#1\rightslice#2\rrparenthesis}
\newcommand{\Pointers}{\{\pointer{\landingpad}{\pointerspan} | \landingpad, \pointerspan \in \naturals\}}
\newcommand{\EnhancedPointers}{\{\pointer{\landingpad}{\pointerspan} | \landingpad \in \naturals^*, \pointerspan \in \naturals\}}
\newcommand{\pointers}{\mathcal{B}}
\newcommand{\Layouts}{\mathcal{L}}
\newcommand{\RepeatLayouts}{\mathcal{M}}
\newcommand{\streams}{\mathcal{S}}
\newcommand{\alayout}{\vec{\ell}}
\newcommand{\itemarr}{\vec{r}}
\newcommand{\synConstants}{\mathcal{C}}
\newcommand{\synPreds}{\mathcal{P}}
\newcommand{\synFuncs}{\mathcal{F}}
\newcommand{\synVars}{\mathcal{V}}
\newcommand{\labelfun}{\lambda}
\newcommand{\labelfunrep}{\bm{\mu}}
\newcommand{\emptyLayout}{\epsilon}
\def\ReplaceStr#1#2#3{%
  \IfSubStr{#1}{#2}{%
    \StrSubstitute{#1}{#2}{#3}}{#1}}
\newcommand{\intlist}[1]{
\ReplaceStr{#1}{,}{\concatenate}
}
\newcommand{\treetoflat}{\nabla}
\newcommand{\alabel}{\omega}
\newcommand{\erlang}{\scName{Erlang}}
\newcommand{\haskell}{\scName{haskell}}
\newcommand{\prolog}{\scName{Prolog}}
\newcommand{\naturals}{\mathbb{N}}
\newcommand{\boldref}[1]{\textbf{\autoref{#1}}}
\newcommand{\marskalkrefinement}[2]{\textbf{#1} #2\\}
\newtheorem{theorem}{Theorem}[section]
\newtheoremstyle{examplestyle}% name of the style to be used
  {0pt}% measure of space to leave above the theorem. E.g.: 3pt
  {2pt}% measure of space to leave below the theorem. E.g.: 3pt
  {\it}% name of font to use in the body of the theorem
  {0pt}% measure of space to indent
  {\bf}% name of head font
  {}% punctuation between head and body
  { }% space after theorem head; " " = normal interword space
  {\thmname{#1}\thmnumber{ #2}:\thmnote{ #3}}% Manually specify head
\theoremstyle{examplestyle}  
\newtheorem{example}{Example}[section]
\newtheoremstyle{mystyle}% name of the style to be used
  {0pt}% measure of space to leave above the theorem. E.g.: 3pt
  {2pt}% measure of space to leave below the theorem. E.g.: 3pt
  {\normalfont}% name of font to use in the body of the theorem
  {0pt}% measure of space to indent
  {\it}% name of head font
  {}% punctuation between head and body
  { }% space after theorem head; " " = normal interword space
  {\thmname{#1}\thmnumber{ #2}:\thmnote{ #3}}% Manually specify head
\theoremstyle{mystyle}  
\newtheorem{observation}{Observation}[section]
\newtheoremstyle{remarkproofstyle}% name of the style to be used
  {0pt}% measure of space to leave above the theorem. E.g.: 3pt
  {0pt}% measure of space to leave below the theorem. E.g.: 3pt
  {\normalfont}% name of font to use in the body of the theorem
  {0pt}% measure of space to indent
  {\it}% name of head font
  {}% punctuation between head and body
  { }% space after theorem head; " " = normal interword space
  {\thmname{#1}\thmnumber{ #2}:\thmnote{ #3}}% Manually specify head
\theoremstyle{remarkproofstyle}
\definecolor{lstbg}{gray}{0.97}
\lstdefinestyle{grammar}{
  %basicstyle=\fontsize{7pt}{1em},
  basicstyle=\fontsize{7pt}{1em}\it\sffamily,
  %% backgroundcolor=\color{lstbg},  % choose the background color; you must add \usepackage{color} or \usepackage{xcolor}
  %% tabsize=2,                      % sets default tabsize to 2 spaces
  literate=
    {;}{{\textbf{;}}}{1}
    {¤oc}{{\bf{\{}}}{1}
    {¤cc}{{\bf{\}}}}{1}
    {¤=}{{\textbf{=}}}{1}
    {¤)}{{\textbf{)}}}{1}
    {¤(}{{\textbf{(}}}{1},
  captionpos=b,                   % sets the caption-position to bottom
  keywordstyle=\bfseries,
  morekeywords={=, A, Z, a, z, r0, 9, h, m, s, ms, us},
  escapeinside={@}{@},
  texcl = true,
  breaklines=false,                % sets automatic line breaking
  %% breakatwhitespace=false,        % sets if automatic breaks should only happen at whitespace
}
\lstdefinestyle{pseudo}{
  %basicstyle=\fontsize{7pt}{1em},
  %% backgroundcolor=\color{lstbg},  % choose the background color; you must add \usepackage{color} or \usepackage{xcolor}
  %% tabsize=2,                      % sets default tabsize to 2 spaces
  captionpos=b,                   % sets the caption-position to bottom
  backgroundcolor=\color{white}
  keywordstyle=\bfseries,
  morekeywords={schedule,process,or},
  escapeinside={@}{@},
  texcl = true,
  frame=, framerule=1pt, rulecolor=\color{white},
  breaklines=false,                % sets automatic line breaking
  %% breakatwhitespace=false,        % sets if automatic breaks should only happen at whitespace
}
\title{\mytitle}
\author{Gabriele Paganelli\\
\url{https://gapag.noblogs.org/}
\email{gapag@distruzione.org}
}
\begin{document}
\maketitle
%\IEEEpeerreviewmaketitle
\begin{abstract}
A bit layout is a sequence of fields of certain bit lengths that specifies how to interpret a serial stream, e.g., the MP3 audio format.
A layout with variable length fields needs to include meta-information to help the parser interpret unambiguously the rest of the stream; e.g. a field providing the length of a following variable length field. If no such information is available, then the layout is ambiguous.
I present a linear-time algorithm to determine whether a layout is ambiguous or not by modelling the behaviour of a serial \reader reading the stream as forward chaining reasoning on a collection of Horn clauses.
\end{abstract}

\section{Introduction}\label{sec:intro}
Programs can read data from files or network interfaces in serial form.
Data might not be available at once, or its consumption might be inherently sequential, such as in digital music.
Programs decode the data stream interpreting it through a structure that defines the layout, or binary format, of the bits within the stream.
Call this process \emph{deserialization}, or \emph{parsing}, or \emph{unmarshalling} interchangeably.
Among the reasons to use ad-hoc binary formats there are: \begin{inparaenum}[\itshape a)]
\item Conciseness over verboseness of \ascii-based exchange formats like \xml or \json;
\item interfacing to legacy or closed-source software that uses custom binary formats for which no parser is accessible;
\item application specific constraints on the binary format.
\end{inparaenum}
The most painful drawback of an ad-hoc binary format is its mainteinance.
Any change in the layout means changing the marshalling/unmarshalling routines, which is error prone due to the fact that bitwise logical and shifting operations are involved and off-by-one errors might fester.
It is therefore appealing to have such routines automatically derived from an high level layout specification.
Consider the Portable Network Graphic (\png) format \cite{PNG}.
\png image files are composed of a fixed 8 bytes header, followed by an arbitrary number of chunks\footnote{Terminology taken from \cite{PNG}.}.
\begin{table}[h]
\centering
\caption{The \png chunk layout.} 
\padparsergraph
\begin{tabular}{l|*{6}{c}}
\textbf{Meaning\ } & \ & \pngLength & \pngType & \pngData & \pngCRC \\\\
\hline\\
\textbf{Bytes\ }& \  & 4 & 4 & Variable & 4 \\
\end{tabular}
\label{bit:png}
\end{table}
A chunk itself has a variable length; \autoref{bit:png} shows its layout.
The \pngLength field's value tells the length of the \pngData field.
Without knowing the value of the \pngLength field it is impossible to unambiguously parse the rest of the chunk (and any following chunks).
This means that not only the presence, but also the position in the stream of the \pngLength field is crucial for deserialization.
Placing \pngLength \emph{after} \pngData would prevent deserialization since it is not possible to know at what point of the stream \pngLength would begin.
The example shows how variable fields urge the presence of meta-information in the stream. In practice these are pointer fields or terminator sequences of bits (or \emph{syncwords}).
These two solutions are not equivalent.
In the \png case it is desirable to know in advance how much memory to allocate, since the image data is buffered to be consumed e.g. for displaying on a screen.
In contrast, syncwords are preferred when the length of the variable field is not known when the data is sent.
Consider the playback of audio-streams like MP3 \cite{isoMPthree}.
 Header packets give the information about the the bitrate of the following data; this fixes the amount of buffering needed, since the data is discarded as soon as it is played back.
The arrival of a new packet header is signalled by a syncword.
\marskalkrefinement{Contributions.}{In this paper I show \emph{a method to formally check if a layout is successfully deserializable or not},
  by defining a stream model and a \emph{\reader} model, 
  that is, 
  a first-order logic axiomatization that encodes in horn clauses the behaviour of a \reader that reads and interprets a sequential stream with respect to a layout.
  I use known reasoning techniques to infer the layout properties.}
\marskalkrefinement{Paper structure.}{In \boldref{sec:background} I summarize the needed background about reasoning on knowledge bases. In \boldref{sec:method:deser} I elaborate a simplified model of the layout, and introduce the \reader model and the deserialization check. In \boldref{sec:refinements} I make the model more expressive and analyze the consequences on deserialization.
In \boldref{sec:related} I discuss related work. In \boldref{sec:marskalk} I conclude the paper and illustrate future work.}
\marskalkrefinement{Disambiguation.}{In the following, the intended meaning of the word model is ``mathematical description of a process" and \emph{not} ``interpretation that makes true a theory in first-order logic".}
\section{Background: Knowledge Representation and First-Order Logic}\label{sec:background}
Let $\KB$ be a finite conjunction of first-order formulae of the form $\premise \Rightarrow \consequence$, called \emph{rules}.
A rule of the form $\true \Rightarrow \consequence$ is called a \emph{fact}.
$\premise$ is a conjunction of (possibly negated) predicates; $\consequence$ is a predicate\footnote{Propositions are considered here as nullary predicates.}.
$\KB$ is called \emph{knowledge base} and represents the known causal relations and facts about a modelled domain.
It is possible to infer new facts using a \emph{forward chaining} algorithm \cite{RussellAI}, that is, repeatedly applying modus ponens to the rules and facts present in $KB$ until no new facts are inferred.
A rule with no negated premises is called Horn rule (or clause); a knowledge base made of Horn rules is a Horn knowledge base.
Such class of knowledge bases is relevant because inference can be efficient \cite{dowling1984linear}.
Using forward chaining on a $KB$ of a first-order language without functions is guaranteed to always terminate, because the number of facts that can be generated is finite;
without functions no other references to domain elements than the ones explicitly mentioned in the knowledge base can be created.
Forward chaining might not terminate if the language has functions, as it might endlessly generate new facts; e.g., applying forward chaining to the Peano axioms.
When used in rule-based languages and systems \cite{clips,jess}  such as \clips or \jess, forward chaining models the reasoning process of an agent, where the knowledge base represents what the agent knows at a particular point of the reasoning.
In this case, forward chaining uses negation as failure \cite{negationAsFailure} besides modus ponens, which practically means that the lack of a fact in $\KB$ implies its falsity.
For instance, consider $\KB' = \{\neg A(1) \Rightarrow B(1), B(x) \Rightarrow A(x)\}$.
An expert system like \clips would infer $\KB'' = \KB' \cup \{B(1), A(1)\}$; modus ponens alone would not apply.

\section{Deserialization of Binary Layouts}\label{sec:method:deser}\label{sub:layout}
A layout is the sequence of fields, left to right, expected in reading a stream.
I first give a formal model to describe layouts with sufficient detail.
I then formulate a first-order formal system
$\formalSystem = \langle \readerFol{\ }, \modusponens, \axiomset \rangle$
where $\readerFol{\ }$ is a first-order language,
$\modusponens$ is the modus ponens inference rule,
$\axiomset$ a set of axioms describing the \reader's knowledge.
I will write $\axiomset \modusponens^* \alpha$, where $\alpha$ is a formula in $\readerFol{\ }$, to mean that a proof exists for $\alpha$ in $\formalSystem$.
Let the following be: $\pointers = \Pointers$;
$\items = \pointers \cup \{\field, \varfield\}$;
$\LabeledItems = \items \times \naturals$.
I represent a pair $\alabeleditem{\anitem, i} \in \LabeledItems$ as $\anitem_i$ instead of the usual tuple format.
Let $A$ be any set.
Let
$\concatenate : A^n \times A^m \longmapsto A^{n+m}$
be a family of associative concatenation operations,
 which concatenate together two tuples:
e.g. $\concatenateApp{\field_1\field_2}{\varfield_5\field_2} = \field_1\field_2\varfield_5\field_2$.
Define the family of size functions $|x| : A^k \longmapsto \naturals$ as $|x| = k$, which tells the number of elements in the tuple:
e.g., $|\field_1\field_2\varfield_5\field_2| = 4$.
The set of tuples of $A$ of any size is denoted by $A^* = \bigcup_{k \in \naturals}{A^k}$.
Given an $\vec{a} \in A^*$ and $\alpha \in A$ I will write
$\alpha \occurs \vec{a}$ meaning that $a$ occurs in $\vec{a}$ at any position;
$\alpha \occurs_i \vec{a}$ where  $\vec{a}\in A^n$ and $i\in \naturals, 0\leq i<n$, to mean that $\alpha$ occurs in $\vec{a}$ at position $i$.

Define the function $\labelfun: \items^* \longmapsto \LabeledItems^*$ as
%\begin{align*}
$\labelfun(\emptyLayout) = \emptyLayout$, 
$\labelfun(\anitem) = \anitem_0$, 
$\labelfun(\concatenateApp{\anotheritem}{\anitem}) = \concatenateApp{\labelfun(\anotheritem)}{\anitem_k}$
%\end{align*}
where $\anitem \in \items, \anotheritem \in \items^{k-1}$, and $\emptyLayout \in \items^0$ is the identity element of the $\concatenate$ operator.
Let $\Layouts' = \Image(\labelfun)$.
Given a $\alayout \in \Layouts'$ I will write
$\anitem_\alabel \occurs \alayout$ meaning that $\anitem_\alabel$ occurs in $\alayout$; $\anitem \occurs \alayout$ meaning that there is an $\anitem$ occurring in $\alayout$ with any label.
The set of layouts $\Layouts \subset \Layouts'$ is defined as follows:
 $\Layouts = \{\alayout \in \Layouts' |
  \forall \anitem.\big(\anitem \occurs \alayout\big) \wedge \big(\anitem = \pointer{o}{s}_k\big) \Rightarrow
     \big(o<|\alayout|\big) \wedge \big(s\leq|\alayout|-o\big)
  \}$.
The above cryptic formal introduction is to set a framework for describing, later in the paper, extensions to the layout model:
the function $\labelfun(x)$ assigns unique labels that identify the items in a tuple $x$ with their position in $x$.
I will sometimes drop the labeling subscripts for readability.

Each layout field has a length, that is, the number of contiguous bits that will represent the content of the field in the stream.
The concrete value of the length is not important in this work.
The meaning of each $\anitem \in \items$ is defined as follows: \begin{inparaenum}[\itshape 1)]
\item $\field$ is a fixed length field;
\item $\varfield$ is a variable length field, or \emph{varfield};
\item any $\pointer{\landingpad}{\pointerspan} \in \pointers$ indicates a fixed length pointer field, where
$\landingpad$ is the offset label of the pointer, \emph{offset} in short, and
$\alabel = \landingpad + \pointerspan$ is the label of the item the pointer is pointing to --- thus I call $\pointerspan$ the \emph{span} (\emph{and not length!}) of a pointer;
I define a function
$\ptrRange : \pointers \longmapsto \naturals$
as $\ptrRange(\pointer{\landingpad}{ \pointerspan}) = \{j \in \naturals | \landingpad \leq j < \landingpad+\pointerspan\}$
which tells the \emph{range} of a pointer.
Note that the definition of $\Layouts$ rules out pointers pointing or spanning beyond $|\alayout|$.
A layout $\alayout \in \Layouts$ defines the structure of a set of concrete bitstrings, denoted by $\streams(\alayout)$.
\end{inparaenum}
\begin{example}The representation of the \png chunk of \autoref{sec:intro} is
$\pointer{2}{1}_0\field_{\ }\varfield_2\field_{ }$; and $\ptrRange(\pointer{2}{1}) = \{2\}.$
\end{example}
\subsection{\Reader Model: Axioms and Knowledge Representation}\label{subsec:semantic}
A \reader reads a stream sequentially and interprets the fields according to their layout $\alayout \in \Layouts$.
I assume that it is not possible to know whether the stream is over or not, e.g. with an \emph{end-of-stream} signal, event, or symbol.
%\subsubsection{Syntax}
Consider a first-order language
$\readerFol{n} = \langle \synConstants, \synVars, \synFuncs, \synPreds \rangle$
with a set of constants $\synConstants = \{c_0, \ldots, c_n\}$,
an infinite supply of variables $\synVars$,
a single binary function $\mathbf{+} \in \synFuncs$,
the unary predicate set $\synPreds_u = \{ \beg{}, \len{}, \val{} \}$,
and the ternary predicate set $\synPreds_t =\{\pspan{}\}$.
Let $\synPreds = \synPreds_u \cup \synPreds_t$.
%\subsubsection{Semantics}
\label{subsub:semantic}
I define the \reader model as a theory $\axiomset$ in $\readerFol{n}$, in the following way\footnote{It is understood that the theory is the conjunction of the formulae it contains.}.
Let the domain be $\naturals$.
I impose that the interpretation $\interpr{c} : \synConstants \longmapsto \naturals$ of any $c \in \synConstants$ is fixed: $\interpr{c_0}= 0, \ldots, \interpr{c_n} = n$.
To force the interpretation of the $+$ function I add to $\axiomset$ the axioms\footnote{Not reported here.} defining the addition over $\naturals$.
All $p \in \synPreds$ have a corresponding predicate, $\interpr{p}$, with integer arguments.
My intention is to give the following meanings to the predicates. Let $\alayout \in \Layouts, i \in \synVars, \anitem \in \items$; let $\anitem_{\interpr{i}} \occurs \alayout$. Then
\begin{inparaenum}[\itshape a)]
\item $\beg{i}$ means ``the \reader knows where $\anitem_{\interpr{i}}$ begins in the stream";
\item $\len{i}$ means ``the \reader knows $\anitem_{\interpr{i}}$'s length";
\item $\val{i}$ means ``the \reader knows $\anitem_{\interpr{i}}$'s content";
\item $\pspan{o,s,i}$ tells that there is a pointer field with label $\interpr{i}$ that contains a measure of how many bits there are between the beginning of the fields labeled with $\interpr{o}$ and $\interpr{o+s}$. Less verbosely, it means $\anitem_{\interpr{i}} = \pointer{\interpr{o}}{\interpr{s}}_{\interpr{i}}$. 
\end{inparaenum}
I define the behaviour of a \reader with the following axioms $\axiomset_{\alayout}$
 (implicitly universally quantified):
\begin{enumerate}
\item \itshape{The \reader knows where $\myanitemarg{0}$ begins.}
     \begin{equation}\tagBegin\label{ax:begin}
       \true \Rightarrow \beg{0}
     \end{equation}
\item \itshape{If a \reader knows where $\myanitem$ begins and its length, then it knows its value and where $\myanitemarg{i+1}$ begins.}
      \begin{equation}\tagForward\label{ax:forward}
       \beg{i} \wedge \len{i} \Rightarrow \val{i} \wedge \beg{i+1}
      \end{equation}
\item \itshape{If a \reader knows where $\myanitemarg{i+1}$ begins and the length if its predecessor $\myanitem$, then it knows where $\myanitem$ begins and its value.}
      \begin{equation}\tagBackward\label{ax:backward}
        \beg{i+1} \wedge \len{i} \Rightarrow \beg{i} \wedge \val{i}
      \end{equation}
\item \itshape{If a \reader knows where $\myanitem$ and its successor $\myanitemarg{i+1}$ begin, then it knows $\myanitem$'s length.}
            \begin{equation}\tagJoin\label{ax:join}
              \beg{i} \wedge \beg{i+1} \Rightarrow \len{i}
            \end{equation}
\item \itshape{If $\myanitem = \pointer{\interpr{o}}{\interpr{b}}$ and the \reader knows \begin{inparaenum}[\bf a)]
      \item the value of $\myanitem$
      \item where $\myanitemarg{o}$ begins,
      \end{inparaenum} then it knows where $\myanitemarg{o+b}$ begins.}
      \begin{equation}\tagJumpRight\label{ax:jumpRight}
        \pspan{o,s,i} \wedge \val{i} \wedge \beg{o} \Rightarrow \beg{o+s}
      \end{equation}
\item \itshape{If $\myanitem = \pointer{\interpr{o}}{\interpr{b}}$ and the \reader knows \begin{inparaenum}[\bf a)]
      \item the value of $\myanitem$
      \item where $\myanitemarg{o+b}$ begins,
      \end{inparaenum} then it knows where $\myanitemarg{o}$ begins.}
      \begin{equation}\tagJumpLeft\label{ax:jumpLeft}
        \pspan{o,s,i} \wedge \val{i} \wedge \beg{o+s} \Rightarrow \beg{o}
      \end{equation}
\end{enumerate}
\normalfont
The above axioms are common to all layouts; the following are axioms that are added according to the specific shape of the layout $\alayout$ under analysis\footnote{For completeness one can extend the layout model with a \emph{constant field} $\const$
                                                                                                                                                            to signal the end of a variable field with a constant pattern.
                                                                                                                                                               Thus, for each $\const_i \occurs \alayout$, add an axiom $\beg{i}$.
                                                                                                                                                               This is sound under the assumption that the bits in the stream before $\const_i$ are such that the interpretation is not ambiguous,
                                                                                                                                                               e.g. the pattern in $\const_i$ occurs in the bits of $\varfield_{i-1}$.}.
For each $\myanitem \occurs \alayout$: if $\anitem = \field$ or $\anitem = \pointer{\interpr{\landingpad}}{\interpr{\pointerspan}}$, then the \reader knows
\begin{equation}\tagFieldAx\label{ax:fieldAx}
\true \Rightarrow \len{i}
\end{equation}
and additionally if $\anitem = \pointer{\interpr{o}}{\interpr{s}}$ the \reader knows
\begin{equation}\tagPtrAx \label{ax:ptrAx}
\true \Rightarrow \pspan{\landingpad, \pointerspan, i}
\end{equation} 
I will drop the subscript to $\axiomset_{\alayout}$ whenever the $\alayout$ it refers to is clear from the context. I will call $\axiomset$ the \emph{initial knowledge base}.
In the following I will abuse the notation by using the same digit symbols to represent both \begin{inparaenum}[\itshape a)]
\item the value represented
\item the syntactic entity representing it,
\end{inparaenum} therefore not explicitly representing the interpretation function $\interpr{\cdot}$ when such distinction is not necessary.
Wrapping up, for each $\alayout \in \Layouts$
there is a formal system
$\formalSystem_{\alayout}= \tuple{\readerFol{|\alayout|}, \modusponens, \axiomset}$ which is $\alayout$'s \reader model;
and in the following, whenever i write about a \reader, I implicitly refer to such a structure.
%\begin{tabular}{lcc}
%$\beg{i} \wedge \len{i} \Rightarrow \val{i}$ & & (read) \\
%$\beg{i} \wedge \beg{i+1} \Rightarrow \len{i}$ & & (join) \\
%$\beg{i} \wedge \len{i} \Rightarrow \beg{i+1}$ & & (right)\\
%$\beg{i+1} \wedge \len{i} \Rightarrow \beg{i}$ & & (left)\\
%$\pspan{o,s,i} \wedge \val{i} \wedge \beg{o} \Rightarrow \beg{o+s}$ & & (jumpRight)\\
%$\pspan{o,s,i} \wedge \val{i} \wedge \beg{o+s} \Rightarrow \beg{o}$ & & (jumpLeft)
%\end{tabular}
\subsection{Ambiguity}
The presence of a varfield creates ambiguity.
For instance the layout $\field\varfield$ is ambiguous,
because there is no way for a \reader to know $\varfield_{1}$'s length;
likewise in $\field\varfield\field$ there is no way, in a concrete stream, to tell $\varfield_1$ from $\field_2$.
 A layout $\alayout$ is unambiguous, or \emph{deserializable}, if and only if a \reader can infer the lengths of all $\varfield_i \occurs \alayout$.
Pointers are \emph{bounding} items, in that their presence can bound varfields and therefore disambiguate the layout.
\begin{example}
Consider layout $\pointer{1}{2} \varfield_1 \pointer{1}{1} \pointer{2}{1}_3$.
Item $\varfield_1$ is bounded by $\pointer{1}{2}_0$ and $\pointer{1}{1}_2$; it is not bounded by $\pointer{2}{1}_3$.
\end{example}
\begin{theorem}\label{the:necessarybounding}
\textbf{Necessary condition for deserializability.}
If $\alayout \in \Layouts$ is deserializable then for all $\varfield_j \occurs \alayout$ there exists a pointer $x = \pointer{b}{s}_p \occurs \alayout$ such that $j \in \ptrRange(x)$.
\end{theorem}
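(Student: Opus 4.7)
The plan is to argue the contrapositive. Suppose $\varfield_j \occurs \alayout$ and no pointer $x = \pointer{b}{s}_p \occurs \alayout$ satisfies $j \in \ptrRange(x)$; I will show that $\axiomset$ cannot derive $\len{j}$, so $\alayout$ is not deserializable.

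First I would partition the set of position labels $\{0, 1, \ldots, |\alayout|\}$ into a left region $L = \{0, 1, \ldots, j\}$ and a right region $R = \{j+1, \ldots, |\alayout|\}$, so that the cut between $L$ and $R$ sits exactly at $\varfield_j$. Because no pointer covers $j$, every $\pointer{o}{s}_p \occurs \alayout$ must satisfy either $o + s \leq j$ (both endpoints $o$ and $o+s$ in $L$) or $o \geq j + 1$ (both endpoints in $R$); in particular no pointer bridges the cut.

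The heart of the argument is an induction on the length of the forward chaining derivation establishing the invariant: at every stage the derived knowledge base contains no $\beg{k}$ with $k \in R$ and does not contain $\len{j}$. The base case holds since the only initial $\beg{}$-fact is $\beg{0}$ (with $0 \in L$) and \texttt{field} does not apply to the varfield $\varfield_j$. For the inductive step I would examine each axiom: \texttt{forward} and \texttt{backward} could introduce a new $\beg{k}$ with $k \in R$ only by using a premise that the invariant already forbids (namely $\len{j}$ or an earlier $\beg{}$-fact in $R$); \texttt{join} could produce $\len{j}$ only from the premise $\beg{j+1}$, which lies in $R$; and, thanks to the case split above, \texttt{jumpRight} and \texttt{jumpLeft} always keep both of their $\beg{}$-endpoints on the same side of the cut, so they never move a $\beg{}$-fact from $L$ into $R$.

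The main obstacle is the pointer case of the inductive step, which hinges on the observation that a non-covering pointer has both endpoints on the same side of the cut---without this, the jump axioms could in principle carry facts across. Once the invariant is secured, $\len{j}$ is never derived: the only rule that can produce the length of a varfield is \texttt{join}, and it is blocked by the invariant. Hence $\alayout$ fails the deserializability criterion, and the contrapositive is proved.
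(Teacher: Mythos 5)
Your proof is correct, and it takes a genuinely different route from the paper's. The paper argues forwards from the assumption of deserializability: it fixes a hypothetical derivation of $\len{j}$, observes that it must end with \refJoin{j}, and then traces the premises $\beg{j}$ and $\beg{j+1}$ backwards to show that closing the derivation without a jump axiom whose pointer covers $j$ would require an unbounded regress of \refBackward{k} steps never grounded in \refBegin. You instead prove the contrapositive by a forward invariant (a ``cut'' at position $j$): assuming no pointer covers $j$, every pointer has both endpoints $o$ and $o+s$ on the same side of the cut, so no rule --- in particular neither \refJumpRight{o,s,i} nor \refJumpLeft{o,s,i} --- can ever introduce a $\beg{k}$ with $k>j$, and hence \refJoin{j} is permanently blocked. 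Both arguments rest on the same key observation (only a covering pointer can carry $\beg{\cdot}$ information across $j$), but yours has two advantages: it is a single uniform induction over forward-chaining steps that treats all five $\beg{}$-producing rules symmetrically, and it avoids the paper's somewhat informal case analysis (e.g.\ the paper's auxiliary claim that $\varfield_0 \noccurs \alayout$, and its asymmetric treatment of \refJumpLeft{o,s,i}). The paper's version, in exchange, directly exhibits \emph{where} in any successful derivation the covering pointer must be used, which is closer to the parser-generation perspective of the rest of the paper. One small point worth making explicit in your write-up: the mutual dependence of the two halves of your invariant (no $\beg{k}$ with $k \in R$, and no $\len{j}$) is legitimate because you induct on the length of the derivation and may assume both halves for all premises of the rule under consideration.
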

\begin{proof}
Let $\varfield_j \occurs \alayout$.
 Since $\alayout$ is deserializable, it is true that $\axiomset \modusponens^* \len{k}, 0 \leq k \leq |\alayout|$.
I show that any proof of $\len{j}$ contains the application of \ref{ax:jumpRight} or \ref{ax:jumpLeft}, with $o\leq j <o+s$, by applying the inference steps backwards.
Observe that
\begin{inparaenum}[\itshape 1)]
\item $\len{j} \notin \axiomset$ , otherwise $\varfield_j$ would not be a varfield.
\item The \refJoin{j} axiom is the only axiom that allows to infer $\len{j}$, so any proof necessarily applies \refJoin{j}.
\item $\varfield_{0} \noccurs \alayout$, otherwise $\alayout$ would not be deserializable.
\end{inparaenum}
Assume that there are no proofs involving \ref{ax:jumpRight} or \ref{ax:jumpLeft}. Further, observe the premises of \refJoin{j}: $\beg{j}$ cannot be inferred through \refBackward{j}, since $\len{j}$ is in its premises leading to circularity; likewise $\beg{
j+1}$ cannot be inferred through \refForward{j}. Thus,
\begin{inparaenum}[\itshape 1)]
    \item \label{step:foundation}$\beg{j}$ must then be inferred through \refForward{j}, which means that a proof is a chain of \refForward{k}, with $0\leq k\leq j$;
    \item consequently $\beg{j+1}$ is inferred through \refBackward{j+1}.
     Since the layout is finite, at most $|\alayout|-j$ \refBackward{k} inferences can be done.
 Note that any inference of a new $\beg{k}$ depends on some other $\beg{l}$;
  the only axiom of such shape is $\beg{0}$;
  so applying only \refBackward{k} will not close the proof, which contradicts the hypothesis that $\axiomset \modusponens^* \len{k}$.
  Hence, the proof of at least one of $\beg{t}$ with $t>j$ must include an application of
   \refJumpRight{l,s,p}
   where $l+s = t$, because it allows to infer $\beg{t}$ from $\beg{l}$ where $l < j < t$;
    and as seen at point \ref{step:foundation}, leads backwards to the axiom $\beg{0}$. This requires that $\pointer{l}{s}_p\occurs \alayout$.
 \end{inparaenum}
\end{proof}

 Note that this condition is not sufficient: layout $\pointer{0}{4}\varfield_1\pointer{3}{1}\varfield_3$ satisfies the necessary condition, but it is not possible to know the length of $\varfield_1$ nor $\varfield_3$.
The interesting fact about \autoref{the:necessarybounding} is that there is no constraint on the value $p$ in the layout. This means that pointer and varfield can be in any relative order.
\begin{example}\label{ex:convoluted}
Consider
$\alayout =  \field_{\ } \pointer{2}{3}_{\ } \field_{\ } \varfield_3 \varfield_4 \pointer{3}{1}$ and \autoref{algpic:necessary}.
A parser can read until $\field_{2}$ by applying \refForward{0},
storing the value of the pointer field at 1;
from that point on it can buffer the stream (\refJumpRight{2,3,1}) until $\pointer{3}{1}_{5}$,
which once read with \refForward{5}
allows to determine the lengths of $\varfield_3$ and $\varfield_4$ through \refJumpLeft{3,1,5} and \ref{ax:join},
where $i \in \{2,3\}$ followed by, respectively, \refForward{3} and \refBackward{3}.
\begin{figure}[h]
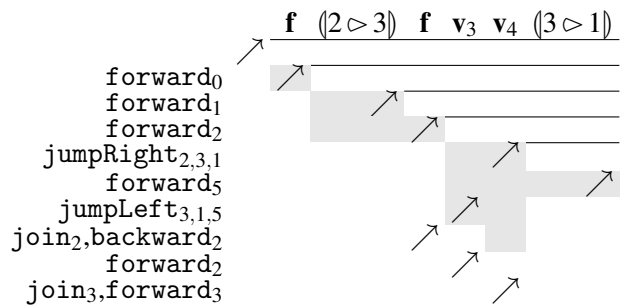

\centering
\padparsergraph
\begin{tabular}{r*{7}{c}}
&&$\field$ &$\pointer{2}{3}$& $\field$ &$\varfield_3$ &$\varfield_4$ &$\pointer{3}{1}$\\\cline{3-8}
&\headpostab&&&&&\\\cline{4-8}
\refForward{0}&&\headpostabbuf&&&&\\\cline{5-8}
\refForward{1}&&&\headpostabbuf&&&\\\cline{6-8}
\refForward{2}&&&\bufferized&\headpostabbuf&&\\\cline{8-8}
\refJumpRight{2,3,1}&&&&&\bufferized&\headpostabbuf\\%\cline{6-7}
\refForward{5}&&&&&\bufferized&\bufferized&\headpostabbuf\\%\cline{6-7}
\refJumpLeft{3,1,5}&&&&&\headpostabbuf&\bufferized&\\%\cline{7-7}
\refJoin{2},\refBackward{2}&&&&\headpostab&&\bufferized&\\%\cline{7-7}
\refForward{2}&&&&&\headpostab&&\\
\refJoin{3},\refForward{3}&&&&&&\headpostab&
\end{tabular}
\caption{Parsing a stream, \autoref{ex:convoluted}. Read top to bottom. Each row is a snapshot of the state of the parsing before the application of the axiom on the left.
The arrow indicates the position of the parser in the stream;
the greyed areas represent pictorially the amount of buffering.
The thin line represents the amount of stream consumed.
The deserialization is successful when all the stream is consumed, and no buffering is left.
}
\label{algpic:necessary}
\end{figure}
\end{example}
\subsection{Deserializability Check Algorithm}
\autoref{alg:simpleDeser} shows in pseudocode how to check for deserializability. It is correct and complete because $\forwardChainingInfer$ is \cite{RussellAI}\footnote{$\forwardChainingInfer$ can be replaced with any existing implementation of forward chaining.}. 
\begin{observation}[\autoref{alg:simpleDeser} terminates.]
Let $\alayout \in \Layouts$ be \autoref{alg:simpleDeser}'s input.
The \ref{ax:forward} and \ref{ax:jumpRight} are the only axioms that can introduce more complex terms using the $+$ function.
I show that such axioms are applied at most $|\alayout|$ times to produce new facts.
 \ref{ax:jumpRight} cannot introduce more facts than the number of pointer fields $\pointerFieldNum < |\alayout|$ since no inference can introduce new $\pspan{o,s,i}$;
 \ref{ax:forward} introduces a new $\beg{i+1}$ if there is a $\len{i}$ fact in the knowledge base; there are two cases.
 \begin{inparaenum}[\itshape 1)]
 \item $\len{i}$ was already present in the initial knowledge base, so $i < |\alayout|$, thus $i+1 \leq |\alayout|$.
 \item  $\len{i}$ could have been inferred through \ref{ax:join}, but this breaks the assumption that $\beg{i+1}$ is not in the knowledge base.
 \end{inparaenum}
The conditional checks the results of the forward chaining algorithm by comparing two finite structures. This proves termination.
\end{observation}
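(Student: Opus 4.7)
The plan is to argue that the forward-chaining closure invoked by \autoref{alg:simpleDeser} produces only finitely many new ground facts, and therefore halts; the subsequent equality test then compares two finite sets, giving termination of the whole algorithm. The real danger to avoid is the presence of the binary function symbol $+$ in $\readerFol{|\alayout|}$, which in principle could let forward chaining manufacture unboundedly deep ground terms such as $\beg{0+1+1+\cdots}$.

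First I would survey the axioms and classify them by whether they can introduce ground terms not already mentioned in $\axiomset_{\alayout}$. The facts \refBegin, \refFieldAx{i}, and \refPtrAx{i} are grounded from the start. The rules \refBackward{i}, \refJoin{i}, and \refJumpLeft{o,s,i} conclude atoms whose arguments already appear among the arguments of their premises, so they cannot add new terms. The only rules that assemble fresh terms through $+$ are \refForward{i}, which concludes $\beg{i+1}$, and \refJumpRight{o,s,i}, which concludes $\beg{o+s}$. It therefore suffices to bound the number of distinct $\beg{\cdot}$ atoms that these two schemata can produce.

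Next I would bound each schema. For \refJumpRight{o,s,i}, a firing requires a $\pspan{o,s,i}$ premise, and no axiom concludes $\pspan{\cdot}$; hence every instance corresponds to some \refPtrAx{i} fact, of which there are at most as many as there are pointer items in $\alayout$, a number strictly less than $|\alayout|$. For \refForward{i}, the premise $\len{i}$ either was introduced by \refFieldAx{i}, in which case $i$ ranges within $\{0,\dots,|\alayout|-1\}$, or was obtained from \refJoin{i}; but that rule can only fire when $\beg{i+1}$ is already in the knowledge base, so no fresh $\beg{i+1}$ can arise through this route. Consequently the set of indices $k$ for which $\beg{k}$ can ever be derived is contained in $\{0,\dots,|\alayout|\}$, which is finite, and so is the whole derivable atom set.

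The subtle step I expect as the main obstacle is precisely the exclusion of the feedback loop through \refJoin{i}: naively one might fear that \refJoin{i} produces a fresh $\len{i}$ which then fires \refForward{i} and yields a fresh $\beg{i+1}$, restarting the cycle with ever larger indices. The observation above blocks this because \refJoin{i} presupposes $\beg{i+1}$. Once finiteness of the derivable atoms is secured, the standard termination property of forward chaining over a finite Herbrand base (cited earlier in the paper) gives termination of $\forwardChainingInfer$, and the concluding set comparison is between two finite sets; hence \autoref{alg:simpleDeser} terminates.
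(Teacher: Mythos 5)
Your proposal is correct and follows essentially the same route as the paper's own argument: isolate \ref{ax:forward} and \ref{ax:jumpRight} as the only rules that build new terms with $+$, bound \ref{ax:jumpRight} by the non-derivability of $\pspan{o,s,i}$, and dispose of the \ref{ax:join}-to-\ref{ax:forward} feedback by noting that \ref{ax:join} already presupposes $\beg{i+1}$. The extra care you take in checking that \ref{ax:backward}, \ref{ax:join}, and \ref{ax:jumpLeft} only reuse terms from their premises is a welcome but inessential elaboration of the same proof.
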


The algorithm is $O(|\alayout|)$ because propositionalizing the axioms takes linear time due to the shape of the axioms and the absence of uninterpreted function symbols;
then, inference is linear on propositional Horn knowledge bases \cite{dowling1984linear,RussellAI}.
\begin{figure}[b]
\begin{algorithm}[H]
 \KwData{$\alayout \in \Layouts$, $V = \{i | \varfield_i \occurs \alayout\}$ }
 \KwResult{A modified knowledge base $\axiomset'$ and the inference graph $\infGraph$}
 Build $\axiomset$ according to \autoref{subsec:semantic}\;
 $\simpleSuccessStruct \leftarrow \forwardChainingInfer(\axiomset)$\;
 \eIf{$\exists . i \in V | \len{i} \notin \axiomset' $}{\Return $\simpleFailureStruct$\;}{
 \Return $\simpleFailureStructRet$
 }
 \caption{Deserializability check}
 \label{alg:simpleDeser}
\end{algorithm}
\end{figure}
%\textbf{Characterization of $\streams(\alayout)$} 
Note that the deserializability check is not sufficient to decide properties of $\streams(\alayout)$, e.g. whether a layout $\alayout$ has $\streams(\alayout) = \emptyset$.
      Consider the following scenario: $\alayout = \pointer{0}{3}\field_{1}\varfield_{2}\field_{3}$,
       and suppose the lengths of the non-variable length fields are, respectively, 1 bit, 3 bits and 3 bits.
      If the meaning of the value of the pointer is to measure the number of bits of the items with labels in $\ptrRange(\pointer{0}{3})$, this value cannot encode that number with just one bit.
        This additional check is not needed to decide deserializability, and can be performed after the deserializability check by analysing, considering field lengths and encodings, the spans of all pointers.     
  
%  \begin{figure}[h]
%  \centering
%  \includegraphics[scale=0.4]{example3_fit.pdf}
%  \caption{The inference graph computed by the analysis of $\alayout =  \field_{\ } \pointer{2}{3}_{\ } \field_{\ } \varfield_3 \varfield_4 \pointer{3}{1}$, of \autoref{ex:convoluted}. Inferences of $\val{i}$ items, and known $\len{i}$ items are not shown for readability.}
%  \label{fig:example}
%  \end{figure}

\section{The Repetition Field}\label{sec:refinements}
A reasonable extension of the model is to have variable occurrences of portions of layout, like the Kleene star in regular expressions:
 $\alayout' = \pointer{1}{1}_0\repetition{\pointer{\intlist{1,0}}{1}_{\intlist{1,0}}\varfield_{\intlist{1,1}}}_1$
 to indicate the infinite set of layouts beginning with a pointer and a sequence of alternating pointer and variable fields.
 The layout is now a tree structure with a new basic item $\repetition{\xspace}$ called \emph{repetition}, and identifiers are tuples of integers in $\naturals^*$.
 Let $\pointers' = \EnhancedPointers$. Let the set $\repeitems$ be defined recursively as follows:
  $\emptyLayout \in  \repeitems$,
  $\anitem \in  \repeitems$ where $\anitem \in \items'$,
  $\repetition{W} \in  \repeitems $ where $W \in \repeitems^*$. 
 Nothing else is in $\repeitems$.
 The empty layout $\emptyLayout$ is defined as the identity operator for $\concatenate$, and
 $\items' = \items \cup \pointers'$.
 To identify each item I define, following the same pattern of \autoref{sub:layout}, the set
 $\LabeledRepItems = \repeitems \times \naturals^*$ (cf. $\LabeledItems$)
 and the function
 $\labelfunrep : \repeitems^* \longmapsto \LabeledRepItems^*$ (cf. $\labelfun$)
 defined as follows:
% \begin{align*}
 
% \end{align*}
\begin{table}[h]
  \centering
 \begin{tabular}{lp{2cm}r}
       {$\!\begin{aligned} % http://tex.stackexchange.com/q/98482/16595 
                \labelfunrep(\emptyLayout) & = \emptyLayout\\
                 \labelfunrep(\anitem) & = \labelfunrep_0(\anitem)\\
                 \labelfunrep_l(\emptyLayout) & = \emptyLayout\\
                  \labelfunrep_l(\anitem)
                     & = \anitem_l \qquad \mbox{where~} \anitem \neq \repetition{\anotheritem}\\
                   \labelfunrep_l(\repetition{\anitem})
                     & = \repetition{\labelfunrep^l_0(\anitem)}_l\\
                   \labelfunrep_l(\concatenateApp{\anotheritem}{\anitem})
                     & = \concatenateApp{\labelfunrep_l(\anotheritem)}{\labelfunrep_{l+n}(\anitem)}\\ 
              \end{aligned}$}
              & &
        {$\!\begin{aligned} % http://tex.stackexchange.com/q/98482/16595 
                         \labelfunrep_l^k(\emptyLayout) & = \emptyLayout\\
                         \labelfunrep_l^k(\anitem) & =
                            \anitem_{\concatenateApp{k}{l}}\\
                         \labelfunrep_l^k(\anitem) & =
                            \repetition{\labelfunrep_0^{\concatenateApp{k}{l}}(\anitem)}_{\concatenateApp{k}{l}}\\
                         \labelfunrep_l^k(\concatenateApp{\anotheritem}{\anitem}) & =
                            \concatenateApp{\labelfunrep^k_l(\anotheritem)}{\labelfunrep^k_{l+n}(\anitem)}
                      \end{aligned}$}
 \end{tabular}
 \end{table}
 In words: the function $\labelfunrep_l$ labels the items left to right, starting from $l \in \naturals$ and introducing a context $l$ when it applied to a repetition; $\labelfunrep_l^k$ labels the items starting from $l \in \naturals$, in the context $k \in \naturals^*$.

Let $\RepeatLayouts' = \Image(\labelfunrep)$.
 The set of layouts $\RepeatLayouts \subset \RepeatLayouts'$ is defined as follows:
 \begin{align}
 \RepeatLayouts = &\{\alayout \in \RepeatLayouts' |\nonumber\\
    &
    \forall \anitem . \bigg( \anitem = \pointer{a}{b}_k \occurs \alayout \bigg) \Rightarrow  \bigg(
          \Big( \anitem  = \pointer{\concatenateApp{s}{c}}{b}_{\concatenateApp{s}{d}}  \vee
          \anitem  = \pointer{s}{b}_{\concatenateApp{s}{e}}\Big)
            \bigg),\label{set:nesting}\\
    &
    \forall y. \bigg( \repetition{y}_m \occurs \alayout\bigg) \Rightarrow \bigg( \Big( \pointer{\concatenateApp{m}{f}}{g} \occurs \alayout \Big)
    \Rightarrow g \leq |l|-f
    \bigg),
    \label{set:existinga}\\
    &
    \forall \anitem . \bigg( \anitem = \pointer{f'}{g'}_{c'} \occurs \alayout  \bigg)
        \Rightarrow \Big(  g' \leq |\alayout|-c'\Big)
        ,
        \label{set:existingb}\\
    &l \in \repeitems^*, y = \labelfunrep_0^m(l), \{b,c,c',d,e,f,f',g,g'\} \subset \naturals, \{a,k,m\} \subset \naturals^*\setminus \naturals^0, s\in \naturals^*\nonumber\\
 \nonumber\}.
 \end{align}
  In words, in all $\alayout \in \RepeatLayouts$:
   (\ref{set:nesting}) tells that the offset of any pointer refers to a label of a parent scope, or to an element at the same level.
           This is needed to prevent ambiguous references.
           For instance in
           $\alayout = \pointer{\intlist{1,0}}{1}_0\repetition{\field_{\intlist{1,0}}}_1 \notin \RepeatLayouts$
           the pointer $\pointer{\intlist{1,0}}{1}_0$ points to  $\field_{\intlist{1,0}}$ which in a concrete stream can appear an unbounded number of times and therefore the pointer would be ambiguous.
   (\ref{set:existinga}) and (\ref{set:existingb}) tell that all pointers have spans that do not exceed the number of fields of the context they are in.
  The list labels give the context needed to state this property.
  The mapping $\treetoflat : \RepeatLayouts \longmapsto 2^\Layouts$ maps, informally\footnote{A formal definition is omitted. The mapping must take care of \begin{inparaenum}[\itshape a)]
  \item flattening the label structure
  \item change the pointer elements' spans and offsets.
  \end{inparaenum} I rely on the intuitive meaning of $\treetoflat$ to avoid a complicated formal definition.}, to a set of $\alayout \in \RepeatLayouts$ without repetitions corresponding to all the combinations of unwindings of the repetitions, $0,1,2 \ldots$ times.
  For the above example: $\treetoflat(\alayout) = \{\pointer{0}{0},
  \pointer{0}{2}\pointer{1}{1}\varfield,
  \pointer{0}{2}\pointer{1}{1}\varfield\pointer{3}{1}\varfield \ldots \}$.
  \subsection{Parser Model}\label{subsub:parserrep}
  As in \autoref{sec:method:deser}, I will define a formal system
  $\formalSystem_{\alayout} = \tuple{\readerRepFol{|\alayout|}, \modusponens, \axiomset}$,
  where $\alayout \in \RepeatLayouts$, to analyze the deserializability of $\alayout$.
%  \paragraph{Syntax.}
 The system's first-order language is $\readerRepFol{n} = \langle \synConstants, \synVars, \synFuncs', \synPreds' \rangle$
 where $\synFuncs' = \synFuncs \cup \{\concatenate\}$
 and $\synPreds' = \synPreds \cup \{\rep{}, \repLen{}\}$ where $\rep{}$ is a binary predicate and $\repLen{}$ is a unary predicate.
% \paragraph{Semantics.} 
  Let the domain be $\naturals^*$;
 predicate symbols in $\synPreds'$ map to predicates of the same arities and names.
 The interpretation of constant symbols is fixed as explained in \autoref{subsub:semantic}, \emph{mutatis mutandis}.
 The function symbol $+$ is interpreted as addition over integers;
 it is left undefined for arguments $a \notin \naturals$.
 The symbol $\concatenate$ corresponds to the tuple concatenation function introduced in \autoref{sub:layout}\footnote{Axioms defining the behaviour of integers, lists of integers and the relevant operations are not reported here.}.
 The axioms of \autoref{subsub:semantic} are lifted to the list domain:
\begin{align}
       \true &\Rightarrow \beg{0}\tagBegin\label{ax:begin:ref}\\
       \beg{\intlist{b,a}} \wedge \len{\intlist{b,a}}
          &\Rightarrow %\nonumber\\
          \val{\intlist{b,a}} \wedge \beg{\intlist{b,a+1}}\tagRepForward\label{ax:forward:ref}\\
       \beg{\intlist{b,a+1}} \wedge \len{\intlist{b,a}}
          &\Rightarrow \beg{\intlist{b,a}} \wedge \val{\intlist{b,a}}\tagRepBackward\label{ax:backward:ref}\\
       \beg{\intlist{b,a}} \wedge \beg{\intlist{b,a+1}}
          &\Rightarrow \len{\intlist{b,a}}\tagRepJoin\label{ax:join:ref}\\
       \pspan{\intlist{b,a},s,i} \wedge \val{i} \wedge \beg{\intlist{b,a}} &\Rightarrow \beg{\intlist{b,a+s}}\tagRepJumpRight\label{ax:jumpRight:ref}\\
       \pspan{\intlist{b,a},s,i} \wedge \val{i} \wedge \beg{\intlist{b,a+s}} &\Rightarrow \beg{\intlist{b,a}}\tagRepJumpLeft\label{ax:jumpLeft:ref}
\end{align}
where $a,s \in \naturals, b,i \in \naturals^*$ and $+$ has higher precedence than $\concatenate$.
The intended meaning of $\rep{a,l}$ is
``there is a repetition at position $a$ which contains $l$ fields".
Note that a repetition is a field, consistently with how repetitions are labeled.
$\repLen{a}$ means ``the \reader knows the length of the repetition at position $a$". 
  % \item \itshape{If a \reader knows where a repetition begins and where the first item after the repetition begins, then it knows the length of the repetition.}
  \begin{align}
          \rep{\intlist{b,a},l} \wedge \beg{\intlist{b,a}} \wedge \beg{\intlist{b,a+1}} &\Rightarrow \repLen{\intlist{b,a}} \tagRepLenAx\label{ax:replen:ref}\nonumber\\
          \rep{\intlist{b,a},l} \wedge \beg{\intlist{b,a}}                              &\Rightarrow \beg{\intlist{b,a,0}}  \tagRepLenHead\label{ax:replen:accesshead:ref}\nonumber\\
          \rep{\intlist{b,a},l} \wedge \beg{\intlist{b,a+1}}                            &\Rightarrow \beg{\intlist{b,a,l}}\tagRepLenTail\label{ax:replen:accesstail:ref}\nonumber
  \end{align}
  where $a,l \in \naturals$ and $b \in \naturals^*$. Axiom \ref{ax:replen:ref} tells how a \reader gets to know the length of a repetition;
\ref{ax:replen:accesshead:ref} and \ref{ax:replen:accesstail:ref} tell how the \reader accesses the fields inside a repetition.
 For each $\repetition{\anitem_{\intlist{b,0}}\ldots\anitem_{\intlist{b,l-1}}}_b \occurs \alayout$, $\axiomset$ contains the facts
\begin{equation}\tagRepAx\label{ax:rep}
       \true \Rightarrow \rep{b,l}
\end{equation}
  As no axioms in $\axiomset$ allow to deduce any $\rep{}$, this is the only way they can be included in the knowledge base.
  This prevents by construction to have $l \notin \naturals$,
  without the need of typing $\readerRepFol{n}$ or defining $+$ for all $i,j \in \naturals^*$.
  Additional axioms \ref{ax:fieldAx} and \ref{ax:ptrAx} are lifted to the list domain and added to the knowledge base under the same circumstances described for their counterparts in \autoref{subsub:semantic}.
  \paragraph{Caveat!}
  Consider $\alayout = \pointer{1}{1}_0\repetition{\varfield_{\intlist{1,0}}}_1 \in \RepeatLayouts$.
  Applying \autoref{alg:simpleDeser} with the modified knowledge base tells that $\alayout$ is deserializable (\autoref{algpic:caveat}(a)).
  This is unsound: knowing the length of the repetition field does not allow to discriminate the occurrences of $\varfield$ in a concrete stream.
   The problem that the example exposes is that the theory confuses in a single identifier $\intlist{1,0}$ \emph{all} the occurrences of the varfield in the repetition.
   I illustrate how to fix this shortcoming after some preliminary definitions.
    Let $\alayout' = \reverse(\alayout)$ be the permutation of $\alayout$ defined as follows:
         \begin{align*}
          (\anitem \notin \pointers') ~\wedge ~(\anitem_i \occurs \alayout) & ~\Leftrightarrow~ \anitem_{|\alayout|-i}\occurs \alayout'\\
          \pointer{\intlist{a,k}}{b}_i \occurs \alayout & ~ \Leftrightarrow ~
                  \pointer{\concatenateApp{|\alayout|-(a+b)}{k}}{b}_{|\alayout|-i}  \occurs \alayout'%\\
          %|\alayout| & =  |\alayout'|
         \end{align*}
    where $a\in \naturals, k \in \naturals^*$.
  Parsing $\reverse(\alayout)$ is equivalent to parsing $\alayout$ backwards, that is, substituting the axiom $\beg{0}$ with $\beg{|\alayout|}$.
   \begin{example} Let $\alayout = \pointer{1}{2}_0\field_{1}\repetition{
                                                      \pointer{\intlist{1,0}}{2}
                                                      \varfield}_2$. Then
    %\begin{align*}
$\reverse(\alayout) =
                            \repetition{\pointer{\intlist{1,0}}{2}\varfield}_0
                                  \field_{1}
                                    \pointer{0}{2}_2$.
    %\end{align*}
    \end{example}
  Note that $\reverse(\alayout) \in \RepeatLayouts$.
  Furthermore, let $\itemarr \in \repeitems^*$. Then $\itemarr^n$ is the structure such that
  \begin{align*}
  (\anitem \notin \pointers') ~ \wedge ~ (\anitem \occurs_i \itemarr) & ~\Leftrightarrow~  \anitem \occurs_{|\itemarr|-i} \itemarr^n\\
  \pointer{\intlist{a,k}}{b} \occurs_i \itemarr & ~\Leftrightarrow ~\pointer{\intlist{a+n,k}}{b} \occurs_i \itemarr^n%\\
  %|\itemarr| & = |\itemarr'|
  \end{align*}

where $a,n \in \naturals, k \in \naturals^*$. The sequence $\itemarr^n$ is the same as $\itemarr$, where the head of all offset labels is increased by $n$.
   \begin{example}
    Let $\itemarr \in \repeitems^*$. Then
    \begin{align*}
    \itemarr  &= \pointer{0}{2}\repetition{
                              \pointer{\intlist{1,0}}{2}
                              \varfield
                              \repetition{
                                  \pointer{\intlist{1,2,0}}{2}\field}
                                  }\\
    \itemarr^3 &= \pointer{3}{2}\repetition{
                                  \pointer{\intlist{4,0}}{2}
                                  \varfield
                                  \repetition{
                                      \pointer{\intlist{4,2,0}}{2}\field}
                                      }
    \end{align*}

    This transformation takes care of properly translating the pointer offsets when concatenating tuples of fields, as will happen below.
   \end{example}
   As a last premise, \autoref{the:necessarybounding} is lifted to include repetitions.
   Let $\ptrRange' : \pointers' \longmapsto \naturals^*$ be defined as $\ptrRange'(\pointer{\intlist{b,a}}{s}) = \{\intlist{b,k}| a \leq k < a+s \}$ where $a \in \naturals$.
      \begin{theorem}\label{the:necessaryboundingrepeat}
      \textbf{Necessary condition for deserializability with repetitions.}
      If $\alayout \in \RepeatLayouts$ is deserializable then for all $\anitem_j \occurs \alayout$, where $\anitem \in \{\repetition{\anotheritem},\varfield\}$, then there exists a pointer $x = \pointer{b}{s}_p \occurs \alayout$ such that $j \in \ptrRange'(x)$.
      \end{theorem}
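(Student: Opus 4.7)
The plan is to lift the proof of \autoref{the:necessarybounding} to the repetition setting, with a case analysis on $\anitem \in \{\varfield, \repetition{\anotheritem}\}$. In each case, deserializability of $\alayout$ yields a proof in $\formalSystem_\alayout$ of the pertinent ``length-type'' fact at $j$: $\len{j}$ for a varfield and $\repLen{j}$ for a repetition. I would then trace each such proof backwards in the inference graph, arguing that without a pointer whose $\ptrRange'$ contains $j$ the only inferences available collapse back to \ref{ax:begin:ref} (or to the head-access axiom of an enclosing repetition) without ever crossing position $j$ at its scope, contradicting deserializability.

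For the varfield case the original proof ports over essentially verbatim, replacing integer indices with list-valued labels and reading $j+1$ as $\intlist{b,a+1}$ when $j=\intlist{b,a}$. The only axiom with $\len{j}$ as a consequent is \ref{ax:join:ref}, and neither of its premises $\beg{j}$ and $\beg{j+1}$ can be inferred through \ref{ax:forward:ref}/\ref{ax:backward:ref} chains without presupposing $\len{j}$ itself. Tracing the chain back to an initial anchor in scope $b$ forces at some point an application of \ref{ax:jumpRight:ref} or \ref{ax:jumpLeft:ref} with a pointer $x=\pointer{b}{s}_p$ such that $j \in \ptrRange'(x)$.

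For the repetition case the only axiom producing $\repLen{j}$ is \ref{ax:replen:ref}, whose antecedents are again $\beg{j}$ and $\beg{j+1}$ together with $\rep{j,l}$. Crucially, no axiom converts $\repLen{j}$ into $\len{j}$, so the forward/backward chains available at the enclosing scope cannot ``skip over'' the repetition without first producing $\beg{j+1}$; the very same circularity obstruction then applies and forces the existence of a bounding pointer in the same scope.

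The main obstacle will be ruling out that the new scope-navigation axioms \ref{ax:replen:accesshead:ref} and \ref{ax:replen:accesstail:ref} offer a shortcut. I would observe that \ref{ax:replen:accesshead:ref} yields only $\beg{\intlist{b,a,0}}$ at a strictly deeper scope and thus never supplies $\beg{j+1}$ directly; and that \ref{ax:replen:accesstail:ref} takes $\beg{j+1}$ as a premise, so it can propagate a known right-anchor inward but cannot be used to first establish it. Hence no combination of these axioms closes the $j$-to-$(j+1)$ gap at scope $b$ unless a pointer jump covering $j$ is invoked, which is exactly the bounding pointer the theorem asks for.
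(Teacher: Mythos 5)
The paper itself omits this proof, saying only that it is ``similar to that of \autoref{the:necessarybounding}'', so your plan of lifting that argument is the intended route, and your identification of the scope-navigation axioms as the one genuinely new obstacle is exactly right. The problem is how you resolve that obstacle. You argue that \ref{ax:replen:accesshead:ref} never supplies $\beg{j+1}$ and that \ref{ax:replen:accesstail:ref} ``takes $\beg{j+1}$ as a premise'', so that neither can close the $j$-to-$(j{+}1)$ gap without a pointer ranging over $j$. That is true only when $j$ is the repetition itself. When $j$ is an item \emph{inside} a repetition at one of its boundaries, the head axiom yields $\beg{\intlist{b,a,0}}$ --- i.e.\ $\beg{j}$ for the first inner item --- from $\beg{\intlist{b,a}}$, and the tail axiom yields $\beg{\intlist{b,a,l}}$ --- i.e.\ $\beg{j+1}$ for the \emph{last} inner item --- from $\beg{\intlist{b,a+1}}$; both are imported from the enclosing scope with no pointer whose $\ptrRange'$ contains $j$. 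For a singleton repetition body these two together discharge both premises of \refJoin{j}. This is precisely the paper's own Caveat example $\pointer{1}{1}_0\repetition{\varfield_{\intlist{1,0}}}_1$ (\autoref{algpic:caveat}(a)): the formal system derives $\len{\intlist{1,0}}$ although the only pointer's range is $\{1\}$, which does not contain $\intlist{1,0}$.

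Consequently the varfield case does not ``port over essentially verbatim'': under the raw formal-system notion of deserializability the Caveat layout is a counterexample to the claim as you are proving it. To repair the argument you must either (i) carry out the proof on the layout \emph{after} the duplication transformation, where the two copies of a boundary item can no longer both be bounded by the head and tail axioms alone, or (ii) read ``deserializable'' in the concrete-stream sense and, for a varfield inside a repetition, route the conclusion through the bounding pointer of the \emph{enclosing} repetition that your (correct) repetition case provides --- but note that such a pointer's range contains the repetition's label, not the inner label $j$, so the statement you end up proving must be adjusted accordingly. Your treatment of the repetition case itself ($\repLen{j}$ obtainable only via \ref{ax:replen:ref}, whose $\beg{}$ premises live at the enclosing scope and are subject to the same circularity obstruction as in \autoref{the:necessarybounding}) is sound and is the part that genuinely lifts verbatim.
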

      The proof is similar to that of \autoref{the:necessarybounding} and is therefore omitted.\qed

    Observe that there are $r \in \repeitems^-\subset \repeitems^*$ such that
    $\alayout = \labelfunrep(r)$ is not deserializable, but if prepended with a bounding pointer they are:
    \begin{equation}\tag{\scName{once}}\label{eq:once}
      \alayout' = \labelfunrep(\concatenateApp{\pointer{1}{|r|}}{r^1})
    \end{equation}
     This is the case of \autoref{algpic:caveat}(c).
    This means that there is an item $\anitem_i \occurs \alayout'$ such that knowing $\beg{1}$  and $\beg{|r|+1}$ entails $\len{i}$.
    If one considers
 \begin{equation}\tag{\scName{twice}}\label{eq:twice}
    \alayout'' = \labelfunrep(\concatenateApp{\pointer{1}{2|r|}}{\concatenateApp{r^1}{r^{|r|+1}}})
 \end{equation}

  then the following can be proved true $\forall r \in \repeitems^-$, thus when $\alayout'$ is deserializable and $\alayout$ is not:

    \begin{theorem}\label{the:reverse}
    $\alayout''$ is deserializable $\Leftrightarrow$ $\alayout_r = \reverse(\labelfunrep(r))$ is deserializable.
    \end{theorem}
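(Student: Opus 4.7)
The plan is to collapse the biconditional to a single auxiliary fact: $\alayout''$ \emph{is deserializable iff $\beg{|r|+1}$---the boundary position between the two copies of $r$---is derivable from $\alayout''$'s initial knowledge base}. The outer pointer contributes $\beg{1}$ via \refForward{0} and $\beg{2|r|+1}$ via \refJumpRight{1,2|r|,0}, and by construction of $r^{1}$ and $r^{|r|+1}$ no internal pointer crosses the boundary label $|r|+1$, so inferences in one half can influence the other only through facts indexed by that single label.

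First I would prove the reduction. The $(\Rightarrow)$ direction is trivial, because full deserializability yields every $\beg{k}$ by chained \ref{ax:forward:ref}. For $(\Leftarrow)$, once $\beg{|r|+1}$ is derived the pair $(\beg{1},\beg{|r|+1})$ around $r^{1}$ and the pair $(\beg{|r|+1},\beg{2|r|+1})$ around $r^{|r|+1}$ are each isomorphic (via the label shifts built into $r^{1}$ and $r^{|r|+1}$) to the \ref{eq:once} configuration on $r$, which is deserializable by $r\in\repeitems^-$. Next I would case-split on how $\beg{|r|+1}$ is first concluded. The only conclusions producing $\beg{|r|+1}$ are \refForward{|r|}, \refBackward{|r|+1}, a \ref{ax:jumpRight:ref} with $o+s=|r|+1$, or a \ref{ax:jumpLeft:ref} with $o=|r|+1$ (plus \ref{ax:replen:accesshead:ref} and \ref{ax:replen:accesstail:ref} if the boundary falls on a repetition endpoint). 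Since the premises of that critical inference cannot themselves depend on $\beg{|r|+1}$ and no pointer crosses the boundary, they must lie either entirely in the forward closure of $r^{1}\cup\{\beg{1}\}$ or entirely in the backward closure of $r^{|r|+1}\cup\{\beg{2|r|+1}\}$.

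The first option can be ruled out: the label-shift isomorphism defining $r^{1}$ turns a forward derivation of $\beg{|r|+1}$ into a forward derivation of $\beg{|r|}$ in standalone $r$ starting only from $\beg{0}$; but then the forward closure from $\beg{0}$ already contains both endpoints of $r$, and by \ref{eq:once} it extends to every $\len{}$, making $r$ deserializable alone and contradicting $r\in\repeitems^-$. The second option, by the symmetry between \ref{ax:forward:ref}/\ref{ax:backward:ref} and between \ref{ax:jumpRight:ref}/\ref{ax:jumpLeft:ref} combined with the equivalence between $\reverse$-parsing and backward parsing noted just after the definition of $\reverse$, is isomorphic to the forward chain in $\alayout_r$ reaching its far endpoint; one last appeal to \ref{eq:once} (applied to $r$) promotes that endpoint-reachability to full deserializability of $\alayout_r$, and conversely deserializability of $\alayout_r$ supplies the required backward derivation of $\beg{|r|+1}$. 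Chaining these equivalences yields the theorem. The main obstacle will be stating the label-shift isomorphism precisely enough to transport derivations back and forth between the shifted copies $r^{1}, r^{|r|+1}$ inside $\alayout''$ and standalone $r$ or $\alayout_r$, and in particular to handle the nested repetition case through \ref{ax:replen:accesshead:ref} and \ref{ax:replen:accesstail:ref}.
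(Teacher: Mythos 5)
Your proposal is correct and follows essentially the same route as the paper's sketch: both arguments hinge on the observation that no pointer crosses the boundary between $r^{1}$ and $r^{|r|+1}$, so the fact $\beg{|r|+1}$ is the sole channel between the two halves, and it can be obtained only forward through $r^{1}$ (excluded because $\labelfunrep(r)$ is not deserializable) or backward through $r^{|r|+1}$ from $\beg{2|r|+1}$ (equivalent to deserializability of $\alayout_r$), with the converse direction using the outer pointer to reach $\beg{2|r|+1}$ and then parsing backwards. Your packaging of the argument as a reduction to derivability of the boundary fact, together with the explicit enumeration of the rules that can conclude $\beg{|r|+1}$, merely makes explicit what the paper's sketch leaves implicit.
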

    \begin{proof}
    ($\Rightarrow$, \scName{sketch.}) Suppose $\alayout_r$ is not deserializable.
    This means that there exists an $\anitem_i \occurs \alayout_r$ whose length is unknown,
    which corresponds in $\alayout''$ to the two items $\anitem_{i+1}$ and $\anitem_{i+|r|+1}$.
    Observe that no pointers can span from $r^1$ to $r^{|r|+1}$ by construction, which together with \autoref{the:necessaryboundingrepeat} means that there is no chance that the deserializability of $\alayout''$ comes from concatenating $r^1$ and $r^{|r|+1}$.
    Then the knowledge of $\len{i+1}$ depends on $\beg{1}$ and $\beg{|r|+1}$,
    and that of $\len{i+|r|+1}$ depends on $\beg{|r|+1}$ and $\beg{2|r|+1}$, because $\alayout'$ is deserializable.
    $\beg{1}$ and $\beg{2|r|+1}$ can be reached from $\beg{0}$, respectively applying \refForward{0} and \refJumpRight{0,2|r|,0}.
    $\beg{|r|+1}$ can be inferred in two ways:
    \begin{inparaenum}[\itshape 1)]
    \item from $\beg{1}$ through $r^{1}$, but this contradicts that $\alayout$ is not deserializable because if one could infer $\beg{|r|+1}$ from $\beg{1}$ then $\alayout$ would be deserializable. Contradiction.
    \item from $\beg{2|r|+1}$, backwards through $r^{|r|+1}$, which would then mean that one could infer $\beg{|r|+1}$ from $\beg{2|l|+1}$, which means that $\alayout_r$ is deserializable. Contradiction.
    \end{inparaenum}\\
    ($\Leftarrow$, \scName{sketch.}) The pointer $\pointer{1}{2|r|}_0$ allows to buffer the whole layout until the end of $r^{|r|+1}$.
  Since $\alayout_r$ is deserializable, the \reader can parse backwards the whole span of $\pointer{1}{2|r|}_0$.
    \end{proof}
     \begin{example}
     Let $r = \varfield_{0}\pointer{0}{1}_{1}$. Observe that
     $\alayout_r = \reverse(\labelfunrep(r)) = \pointer{1}{1}_{0}\varfield_{1}$ is deserializable; the layout $\alayout'' = \labelfunrep(\concatenateApp{\pointer{1}{4}_0}{\concatenateApp{r^1}{r^{3}}})$ becomes
      $$\pointer{1}{4}_0\varfield_{1}\pointer{1}{1}_{2}\varfield_{3}\pointer{3}{1}_{4}$$
       and is parsed by following the first pointer and reading backwards all that was buffered, since it is possible to infer the length of the varfields.
     \end{example}
   \begin{observation}
    [$\alayout''$ is deserializable $\Rightarrow \alayout'$ is deserializable.]
  By \autoref{the:reverse} it means
  $\alayout_r = \reverse(\labelfunrep(r))$ is deserializable $\Rightarrow \alayout'$ is deserializable, which is true because \refJumpRight{1,|r|,0} allows to infer $\beg{|r|+1}$ and then read backwards since $\alayout_r$ is deserializable;
  \end{observation}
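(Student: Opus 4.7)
The plan is to chain \autoref{the:reverse} with a direct inference argument: deserializability of $\alayout''$ is equivalent to deserializability of $\alayout_r = \reverse(\labelfunrep(r))$, so it suffices to prove that whenever $\alayout_r$ is deserializable, $\alayout'$ is deserializable as well. Since $\alayout' = \labelfunrep(\concatenateApp{\pointer{1}{|r|}}{r^1})$, the leading pointer is syntactically available to the reader and can be exploited to expose both endpoints of the inner block $r^1$ before any forward chaining inside it is attempted.

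First, I would observe that $\beg{0}$ is in $\axiomset$ by \refBegin, and that \refFieldAx{0} gives $\len{0}$. Applying \refForward{0} then yields $\val{0}$ and $\beg{1}$, and combining \refPtrAx{0} with \refJumpRight{1,|r|,0} delivers $\beg{|r|+1}$. Thus both endpoints of the region occupied by $r^1$ are known to the reader without having inspected anything inside $r^1$.

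The heart of the argument is that, given $\beg{1}$ and $\beg{|r|+1}$, parsing $r^1$ inside $\alayout'$ from right to left is, up to relabeling, the same task as parsing $\alayout_r$ starting from its initial position $\beg{0}$. I would make this explicit by exhibiting a translation of derivations: each application of \ref{ax:forward} or \ref{ax:jumpRight} in a successful forward-chaining derivation on $\alayout_r$ corresponds under $\reverse$ to an application of \ref{ax:backward} or \ref{ax:jumpLeft} on the $r^1$-portion of $\alayout'$, because $\reverse$ is defined precisely so that item positions and pointer offsets are mirrored, while \ref{ax:join} is symmetric in its premises. Since by assumption forward chaining on $\alayout_r$ establishes $\len{\cdot}$ for every varfield and every repetition length, the translated derivation establishes the corresponding lengths in $\alayout'$, which together with the length of the leading pointer concludes deserializability.

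The main obstacle is the label bookkeeping: one must verify that the shift encoded by the $(\cdot)^1$ operator composes correctly with the offset inversion performed by $\reverse$ on the labels of $r$, so that every axiom instance used in the derivation for $\alayout_r$ is a valid axiom instance in the knowledge base of $\alayout'$ (in particular, that no pointer in $r^1$ lands outside the span $[1,|r|+1]$, which is guaranteed by condition (\ref{set:existinga}) on $\RepeatLayouts$). Once that correspondence is nailed down, the implication follows by routine forward chaining.
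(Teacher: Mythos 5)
Your proposal is correct and follows essentially the same route as the paper: reduce via \autoref{the:reverse} to showing that deserializability of $\alayout_r$ implies that of $\alayout'$, then use the leading pointer (\refJumpRight{1,|r|,0}) to obtain $\beg{|r|+1}$ and parse $r^1$ backwards from there. You merely make explicit the derivation translation between forward parsing of $\alayout_r$ and backward parsing of $r^1$ (swapping \ref{ax:forward} with \ref{ax:backward} and \ref{ax:jumpRight} with \ref{ax:jumpLeft}), which the paper compresses into the phrase ``read backwards since $\alayout_r$ is deserializable''.
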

  \begin{observation} [If $\alayout''$ is deserializable, all
    $\alayout^{n} = \labelfunrep(\concatenateApp{\pointer{1}{n|r|}}
                                {\intlist{r^1,\ldots,r^{(n-1)|r|+1}}})$
   are, $n>2$.] Therefore it does not matter how many repetitions of $r$ are there, since once $\beg{n|r|+1}$ is known, the stream is reconstructed backwards.
   \end{observation}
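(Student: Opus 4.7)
The plan is to first invoke \autoref{the:reverse} to conclude that $\alayout_r = \reverse(\labelfunrep(r))$ is deserializable, and then proceed by a downward induction on the block index, showing that the bounding pointer lets a \reader reach the right end of the layout and that each copy of $r$ can be parsed from right to left thanks to the deserializability of $\alayout_r$.

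First I would derive $\beg{n|r|+1}$ in the knowledge base. From \refBegin one obtains $\beg{0}$; the leading pointer field yields $\len{0}$ via \refFieldAx{0} and $\pspan{1, n|r|, 0}$ via \refPtrAx{0}. Axiom \refForward{0} then produces $\val{0}$ and $\beg{1}$, and an application of \refJumpRight{1,n|r|,0} gives $\beg{n|r|+1}$.

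Second, I would show by downward induction on $k\in\{n,n-1,\ldots,1\}$ that all lengths inside the $k$-th block $r^{(k-1)|r|+1}$ are inferable and $\beg{(k-1)|r|+1}$ is obtained. The base case $k=n$ uses $\beg{n|r|+1}$ from the previous step. For the inductive step, given $\beg{k|r|+1}$, condition~(\ref{set:existinga}) of $\RepeatLayouts$ together with the shift operator $(\cdot)^{(k-1)|r|+1}$ guarantees that no pointer in block $k$ references fields outside the block. A derivation of the block's lengths anchored at the right by $\beg{k|r|+1}$ thus mirrors, via the reversal map, a derivation in $\formalSystem_{\alayout_r}$ anchored at the left by $\beg{0}$; since $\alayout_r$ is deserializable the mirror derivation succeeds, and in particular it produces a counterpart of $\beg{(k-1)|r|+1}$ in $\alayout^{n}$, which feeds the next inductive step.

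The main obstacle is formalizing this mirror step: one must exhibit a bijection between items of block $k$ of $\alayout^{n}$ and items of $\alayout_r$, induced by the reversal together with the shift $(\cdot)^{(k-1)|r|+1}$, and check that it lifts to a bijection between proofs in the two formal systems, in which applications of \refForward{k} and \refBackward{k} (and of \refJumpRight{o,s,i} and \refJumpLeft{o,s,i}) are swapped. Modulo the bookkeeping for the label shifts, this is essentially a repetition of the ($\Leftarrow$) direction of \autoref{the:reverse}, applied once per block from right to left, so no genuinely new reasoning about the axioms is needed beyond what was already used there.
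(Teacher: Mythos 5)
Your proposal is correct and takes essentially the same route as the paper's own one-sentence justification: use the leading pointer to infer $\beg{n|r|+1}$ and then reconstruct each copy of $r$ backwards, relying on the deserializability of $\alayout_r = \reverse(\labelfunrep(r))$ supplied by \autoref{the:reverse}. Your downward induction over blocks merely makes explicit the block-by-block backward reconstruction that the paper leaves implicit.
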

   \begin{observation} [Otherwise, no $\alayout^{n}$ can be deserializable.]
  This corresponds to those cases where not even the reversed layout is deserializable, such as
  $\varfield_0$ or $\pointer{2}{1}\varfield_1\varfield_2$.
  Since no pointers of any $r^i$ can span beyond $r^i$,
  a \reader will not be able to proceed either forward to $\beg{|r|+1}$ or backwards from $\beg{n|r|+1}$ to $\beg{(n-1)|r|+1}$.
    \end{observation}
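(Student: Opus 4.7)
The plan is to prove the contrapositive: if some $\alayout^n$ with $n \geq 2$ is deserializable, then $\alayout_r$ is deserializable (equivalently, by \autoref{the:reverse}, $\alayout''$ is deserializable), contradicting the hypothesis.

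First I would establish a structural confinement lemma: every pointer inside any copy $r^{k|r|+1}$ of $\alayout^n$ has its entire range contained within that copy. This is a direct consequence of the $(\cdot)^n$ operation, which shifts only the head of each pointer's offset label while leaving the span unchanged, together with constraint (\ref{set:existinga}) of the definition of $\RepeatLayouts$. Thus, apart from the initial pointer $\pointer{1}{n|r|}_0$, the only rules in $\axiomset$ that can move a fact from one copy of $r$ to another are \refForward{} and \refBackward{} applied at a copy boundary; these require the length of the last field of the preceding copy, a fact that itself lives entirely inside that copy.

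Next I would analyse how the seam positions $\beg{k|r|+1}$, $0 \leq k \leq n$, can be derived. From \refBegin, \refForward{0} and \refJumpRight{1,n|r|,0} one immediately obtains $\beg{0}$, $\beg{1}$ and $\beg{n|r|+1}$. For any other seam, including the unique internal seam $\beg{|r|+1}$ when $n=2$, the confinement lemma leaves only two routes: forward, a chain of within-copy inferences from some lower seam up to the boundary, or backward, a symmetric chain from some higher seam down to it. Matching labels via the $(\cdot)^n$ shift, a forward push across one copy is precisely a deserialization derivation of $\labelfunrep(r)$, while a backward push across one copy is precisely a deserialization derivation of $\alayout_r$. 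Moreover, if $\labelfunrep(r)$ itself is deserializable then so is $\alayout''$ (just parse $r^1$ and then $r^{|r|+1}$ forward), and by \autoref{the:reverse} this already forces $\alayout_r$ to be deserializable.

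Putting the pieces together: deserializability of $\alayout^n$ forces every internal seam to be derived, which forces either forward- or backward-parseability of $\labelfunrep(r)$, and either case yields deserializability of $\alayout_r$, the desired contradiction. The main obstacle is the careful label bookkeeping needed to transport a derivation inside a shifted copy $r^{k|r|+1}$ to one inside $\labelfunrep(r)$ or $\alayout_r$: one must verify that the $(\cdot)^n$ shift and the $\reverse$ construction leave every axiom instance intact and preserve the inference graph of a single copy. Once this correspondence is formalised, the remainder of the argument reduces to the short case analysis above on how each internal seam is reached, exactly mirroring the sketch used in the proof of \autoref{the:reverse}.
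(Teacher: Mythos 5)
Your argument is essentially the paper's own justification, phrased contrapositively: you confine every pointer of a copy $r^{k|r|+1}$ to that copy, so that crossing a copy boundary forward amounts to a derivation over $\labelfunrep(r)$ and crossing it backward amounts to one over $\alayout_r = \reverse(\labelfunrep(r))$, neither of which is available under the hypothesis (your extra step routing the forward case through \autoref{the:reverse} is a small but sound refinement). The one imprecision --- equating \emph{reaching} the far seam of a copy with \emph{deserializing} that copy, which is strictly weaker since a parser may jump over undistinguished varfields --- is present in the paper's own sketch as well, so your proposal matches it in both approach and level of rigor.
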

  One can therefore transform a layout under analysis
  $\alayout_0$ into
  $\alayout_1$ by substituting all
  $\repetition{\labelfunrep^k_0(r)}_k \occurs \alayout_0$ with $\repetition{\concatenateApp{\labelfunrep^k_0(r)}{\labelfunrep^k_{|r|}(r^{|r|})}}_k \occurs \alayout_1$:
  that is, duplicating the content of each repetition.
  Observe that such device creates the same environment described in (\ref{eq:twice}):
  when the body of a repetition $\repetition{\anotheritem}_k$ is entered with \refRepLenAx{j}, one knows the extremes of a repetition unwinded twice.
  This is, as sketched, sufficient to determine the deserializability of the repetition.
  This resumes the soundness of the model.
  The duplicating transformation is polynomial\footnote{A coarse estimate can be  $O(nk)$, where
  $n$ is the number of all items appearing in the layout, and
  $k$ is the maximum level of nesting of repetitions;
  observe that $k\leq n$ since repetitions are items too.}, terminates because of the finiteness of layouts, and preserves deserializability:
  i.e., if $\alayout_1$ is deserializable, so is $\alayout_0$\footnote{And the contrapositive: if $\alayout_0$ is not deserializable, $\alayout_1$ is not deserializable.}.
   The advantage of this solution is that it reuses the formal system defined above and does not require side proofs in the formal system $\formalSystem$.
   \autoref{alg:simpleDeser} is upgraded to \autoref{alg:complicated}.
  %Note that $\streams(\alayout_0) \supseteq \streams(\alayout_1)$, because for each concrete stream in $\streams(\alayout'')$ where the repetition $k$ is unwinded $x$ times in a certain bit sequence, there is one in $\streams(\alayout)$ where the repetition $k$ is unwinded $2x$ times in the same bit sequence.
\begin{figure}[t]
          \centering
          \begin{subfigure}[b]{0.5\textwidth}
                  \padparsergraph
                  \begin{tabular}{r*{4}{c}}
                      &&$\pointer{1}{1}_0$& $\repetition{\varfield_{\intlist{1,0}}}_1$&\\\cline{3-4}
                      &\headpostab&&&\\\cline{4-4}
                      \refForward{0}&&\headpostabbuf&&\\
                      \refJumpRight{1,1,0}&&&\headpostabbuf&\\
                      \refRepLenAx{1}&&&\headpostabbuf&  \\
                      \refJoin{\intlist{1,0}},\refBackward{\intlist{1,0}}&&\headpostab&&\\
                      \refJoin{\intlist{1}}&&\headpostab&&
                    \end{tabular}
                    \caption[]{}
                    \label{alg:caveat:analysis}
          \end{subfigure}%
          \\\vspace{0.5cm}
          \begin{subfigure}[b]{0.45\textwidth}
          \padparsergraph
          \begin{tabular}{r*{5}{c}}
                                 &&$\pointer{1}{2}_0$& $\varfield_{1}$ & $\varfield_{2}$ &\\\cline{3-5}
                                 &\headpostab&&&&\\\cline{4-5}
                                 \refForward{0}&&\headpostabbuf&&&\\
                                 \refJumpRight{1,2,0}&&&\bufferized&\headpostabbuf& \\ %\textit{}
                                 &&&&&%\textit{!}
%                                 \refRepLenAx{1}&&&&\headpostabbuf& $\beg{\intlist{1,0}},\beg{\intlist{1,1}}$ inferred\\
%                                 \refJoin{\intlist{1,0}},\refBackward{1}&&\headpostab&&
                               \end{tabular}
                               \caption[]{}
                  \label{alg:caveat:instance}
          \end{subfigure}
          \begin{subfigure}[b]{0.45\textwidth}
          \padparsergraph
           \begin{tabular}{r*{4}{c}}
                                           &&$\pointer{1}{1}_0$& $\varfield_{1}$ & \\\cline{3-4}
                                           &\headpostab&&&\\\cline{4-4}
                                           \refForward{0}&&\headpostabbuf&&\\
                                           \refJumpRight{1,1,0}&&&\headpostabbuf& \\
                                           \refJoin{1}&&\headpostab&&
          %                                 \refRepLenAx{1}&&&&\headpostabbuf& $\beg{\intlist{1,0}},\beg{\intlist{1,1}}$ inferred\\
          %                                 \refJoin{\intlist{1,0}},\refBackward{1}&&\headpostab&&
                                         \end{tabular}
                                         \caption{}
                            \label{alg:caveat:asif}
                    \end{subfigure}
            %add desired spacing between images, e. g. ~, \quad, \qquad, \hfill etc.
            %(or a blank line to force the subfigure onto a new line)
          \caption{
          (a): Parsing $\alayout \in \RepeatLayouts$, according to the model.
          (b): Parsing a concrete instance of $\alayout' \in \treetoflat(\alayout)$:
               in $\alayout'$ the \reader cannot distinguish $\varfield_1$ and $\varfield_2$, as $\beg{2}$ is never inferred to allow applying \refJoin{2} or \refBackward{2}. This is not sound as $\alayout$ is deserializable, and so should be all the $\alayout' \in \treetoflat(\alayout)$ since  $\streams(\alayout) \supset \streams(\alayout')$.
          (c): Successful parsing of a concrete instance of $\alayout'' \in \treetoflat(\alayout)$.
          }\label{algpic:caveat}
  \end{figure}

\begin{figure}[h]
\begin{algorithm}[H]
 \KwData{$\alayout_0 \in \Layouts$, $V = \{i | \varfield_i \occurs \alayout\}$, $R = \{i | \repetition{}_i \occurs \alayout$ \}}
 \KwResult{A modified knowledge base $\axiomset'$ and the inference graph $\infGraph$}
 Duplicate the content of each repetition in $\alayout_0$ into $\alayout_1$\;
 Build $\axiomset$ according to \autoref{subsub:parserrep} from $\alayout_1$\;
 $\simpleSuccessStruct \leftarrow \forwardChainingInfer(\axiomset)$\;
 \eIf{$\exists . i \in V |  \len{i} \notin \axiomset' \vee \exists . i \in R | \repLen{i} \notin \axiomset' $}{\Return $\simpleFailureStruct$\;}{
 \Return $\simpleFailureStructRet$
 }
 \caption{Deserializability check for enhanced \reader model.}
 \label{alg:complicated}
\end{algorithm}
\end{figure}       
%\subsection{Why not alternatives}
%Alternatives might be useful when  e.g. different syncwords are used

%\input{examples}
\section{Related Work}\label{sec:related}
  None of the following works uses explicitly, to my knowledge, any Horn clause representation of the parsing task.
 The \erlang language \cite{erlang} has a pattern-matching construct whose patterns can be binary comprehensions \cite{gustafsson2005bit}, similar to list comprehensions in functional programming languages.
 Given a set of bit patterns, the matcher is synthesized by constructing a labeled automaton and expressing the matching as a series of elementary actions: test the size of a field, read bits, test match. 
  The specification of a binary format is subject to the variable binding rules of \erlang; this entails, in practice, that in the case of \autoref{ex:convoluted} one must code the layout manually, make an explicit analysis of the layout, and possibly spreading the definition through several functions or mixed with \erlang statements, reducing the effectiveness of a layout specification as such.
   \packettypes \cite{mccann2000packet} addresses the processing of protocol packets, hence of bit-strings, through a protocol stack;
  \datascript \cite{back2002datascript} is even more concise, describing the language and its features. Both languages have a syntax that is influenced by the C language.
  They offer capabilites such as attaching constraints on fields and their content.
  The constraints can only refer to elements occurring earlier in the stream, thus ruling out instances such as \autoref{ex:convoluted}.
  \padsproj \cite{fisher2011pads} is a framework for analysing and defining bit-level formats; it can generate parsers and serialize data.
  \padsproj can even infer, given a set of binary data supposedly following the same layout, the actual layout and be tolerant with errors, by reporting them and continuing parsing.
  Moreover \cite{fisher2011pads} introduces a general framework to express the semantics of data description languages, focussing on the \emph{types} of fields, where a type represents details such as endianness and encoding of the concrete bitstrings of the field.
  The framework gives the building blocks to create a type system for the data description language.
  Type-correctness then entails parsability of a layout.
  This contrasts with my approach which does not make explicit mention of types of fields, which are not needed for deciding deserializabilty.
Beyond the motivations described in \autoref{sec:intro}, a huge effort in bit-level compilers targets space-efficient exchange formats.
Popular \ascii-based data exchange formats have the advantage of being human-readable (\json) and validable (\xml);
 both do have a wealth of libraries for manipulation with standard interfaces;
 the disadvantage is that ASCII wastes bandwidth -- e.g. encoding a single boolean value in several bytes, instead of a single bit.
 Programming languages have libraries that allow serialization of their data, like in \haskell \cite{cereal,binary} or in C \cite{tpl}, but the definition of the data format is done within the programming language. Data specification languages \cite{avro, protobuf, bson, messagepack, capnproto} allow the definition, processing and evolution of protocol messages and output parser/serializers for several target programming language.
%The receiver of the stream is supposed to know the layout and how to decode the stream.
Such products hide the composition of the underlying stream to the user;
%This is the biggest difference with the work presented here.
 unlike what presented here, the definition language does not allow to decide e.g. where to put a pointer item (see \autoref{sub:layout}), because the packing algorithms that optimize aspects such as alignment and evolvability rely on a predetermined physical layout.
% e.g. \cite{capnproto}'s technique to store newly added fields allocating unused padded space due to alignment.
%In practice, it is not immediate to use these products with legacy applications with their own data format because the definition language gives no total control on the stream encoding.

\section{Conclusion and Future Work}\label{sec:marskalk}
  I presented a method to determine whether there is a parser that can parse a stream of bits given a description of the bit layout.
  I introduced a language for describing layouts and I described the behaviour of a \reader as reasoning
   within an untyped first-order logic formal system having axioms in the form of Horn clauses.
   The typical use case of this method is the implementation of a bit-stream data-definition language, or of a serialization library. The benefit is that it enables to use existing Horn inference engines.
   At \cite{mypage} there is a \python \cite{python} implementation of the method using the \clips \cite{clips} rule-based language to perform forward chaining. 
     It defines a language to describe layouts and translate them to a \clips program encoding the axioms, input to the \clips interpreter; the \python script interprets back the output. Using \prolog gives no particular advantages over using other programming languages, since \prolog interpreters do backward chaining reasoning, thus one can either
  implement forward chaining or delegate it to any existing library or external tool.
  The previous sections not discuss any preprocessing of layouts. I report some I observed during the development of this work, which are not closely related with this paper's contribution:
  \begin{inparaenum}[\itshape a)]
  \item 
         Save bits by reducing the value contained in the pointer fields by substituting all $\pointer{o}{s} \occurs \alayout$
         with $\pointer{q}{t}$ such that each pointer range is shrinked enough to begin and end with a variable length field.
          This can be done in linear time with a check on the span of every pointer and updating the labels or spans of the pointers left.
           Once a pointer $p$ is shrinked, one might then redesign manually the layout by reducing the length of $p$. For instance, consider $\alayout = \pointer{0}{5}\varfield_1\field_{\ }\varfield_3\field$. Applying the above optimization results in $\alayout' = \pointer{1}{3}\varfield_1\field_{\ }\varfield_3\field$.
           %}
   \item 
         Allow only forward pointers. Backward pointers are unusual in practice, because they can imply buffering that can be avoided.
         One could consider only those layouts such that $\forall \anitem . \big(\anitem  \occurs \alayout\big) 
            \wedge \big(\anitem = \pointer{\intlist{b,a}}{r}_{\intlist{b,x}}\big) \Rightarrow \big( x \leq a \big)$, $a,x \in \naturals$.
         This only constraint does not anyway guarantee that all such layouts are deserializable.
         For instance
         $\alayout' =  \field_{\ } \pointer{2}{4}_{\ } \field_{\ } \varfield_3 \pointer{5}{1} \varfield_5 $
         is not deserializable.
   \item 
           If a pointer's purpose is exclusively to determine the lengths of variable fields, then remove pointers that span over no variable length fields or repetitions.
           This can be done in linear time with a check on the span of every pointer and updating the labels or spans of the pointers left.
         %  }
  \end{inparaenum}
  More complicated analyses and extensions, which are part of future work, are the following:
  \begin{inparaenum}[\itshape i)]
    \item Permute the fields so that minimal buffering is needed. Consider
    %\textbf{Permutation of Fields.}
            $\alayout =  \field_{\ } \pointer{2}{3}_{\ } \field_{\ } \varfield_3 \varfield_4 \pointer{4}{1}$.
           The layout
          $\alayout' =  \field_{\ } \field_{\ } \pointer{3}{1}_{\ } \varfield_3 \pointer{5}{1} \varfield_5 $ is a permutation of $\alayout$; but in $\alayout$ one must buffer both $\varfield_3$ and $\varfield_4$ before being able to distinguish them.
    \item Have a side-effect free constraint language (like \datascript or \padsproj in \autoref{sec:related} do) to express constraints between values and lengths of fields; the constraints contribute in building the axiom set. 
          Consider layout $\alayout = \field_0\field_1\varfield_2$.
          If $\varfield_2$ is a sequence of samples, $\field_0$ tells the number of samples in $\varfield_2$ and $\field_1$ tells the number of bits each sample has, then this corresponds to the axiom $\val{0} \wedge \val{1} \Rightarrow \len{2}$.This constraint feature enables for instance to use variable fields as pointers.
            \item The inference graph can be used to generate a parser for streams $\streams(\alayout)$.
                   The axioms applied during the reasoning can be translated into actions, similarly to \cite{gustafsson2005bit}:
                   \ref{ax:jumpRight} corresponds to buffering new data from the stream,
                   and \ref{ax:jumpLeft} or again \ref{ax:jumpRight} to addressing within the buffer in case of already buffered data.
                   \ref{ax:join}, \ref{ax:forward} and \ref{ax:backward} correspond to consuming data and associating it to a field.
       Note that it is an optimization problem: since the inference graph is a DAG,
       there are several topological orderings each of which would map to a parser with specific performances in e.g. memory consumption. Describing details of this optimization and related research is future work.    
    \end{inparaenum}

\bibliographystyle{eptcs}
\bibliography{biblio}
\end{document}